\newcommand{\aamasoff}[1]{}

\documentclass[submission,copyright,creativecommons]{eptcs}
\usepackage{breakurl}             

\title{How to Be Both Rich and Happy:  \\ Combining Quantitative and Qualitative Strategic Reasoning about Multi-Player Games (Extended Abstract)}
\author{Nils Bulling
\institute{
Clausthal University of Technology, Germany}
\email{bulling@in.tu-clausthal.de}
\and
Valentin Goranko
\institute{Technical University of Denmark, Denmark
}
\email{vfgo@imm.dtu.dk}
}

\usepackage{xspace}
\usepackage{amsmath}
\usepackage{amssymb}
\usepackage{xcolor}
\usepackage{graphicx}

\newtheorem{definition}{Definition}

\newtheorem{proposition}{Proposition}
\newtheorem{lemma}{Lemma}

\newtheorem{corollary}{Corollary}
\newtheorem{example}{Example}
\newenvironment{proof}[1][]{\paragraph{Proof{#1}.}}{\hfill\null$\square$}

\newcommand{\Agents}{\ensuremath{\mathsf{Ag}}\xspace}
\newcommand{\States}{\ensuremath{\mathit{S}}\xspace}
\newcommand{\Actions}{\ensuremath{\mathsf{Act}}\xspace}
\newcommand{\model}{\ensuremath{\mathfrak{M}}\xspace}

\newcommand{\ACF}{\ensuremath{\mathsf{ACF}}\xspace}
\newcommand{\AC}{\ensuremath{\mathsf{AC}}\xspace}
\newcommand{\APCF}{\ensuremath{\mathsf{APCF}}\xspace}
\newcommand{\APC}{\ensuremath{\mathsf{APC}}\xspace}
\newcommand{\ut}{\mathsf{payoff}}

\newcommand{\nnote}[1]{{\ \color{blue}\bf{}{NB: #1}}}
\newcommand{\vnote}[1]{{\ \color{red}\bf{}{VG: #1}}}

\newcommand{\coop}[2][]{\langle\!\langle{#2}\rangle\!\rangle_{_{\!\mathit{#1}}}}

\newcommand{\Sometm}[1][]{\mathbf{F}}
\newcommand{\Always}[1][]{\mathbf{G}}
\newcommand{\Until}[1][]{\mathbf{U}}
\newcommand{\Next}[1][]{\mathbf{X}}
\newcommand{\true}{\top}

\newcommand{\Logicname}[1]{\ensuremath{\mathsf{#1}}}

\newcommand{\NGL}{\Logicname{NGL}\xspace}
\newcommand{\NGLs}{\Logicname{NGL^*}\xspace}

\newcommand{\Plays}{\ensuremath{\mathsf{Plays}}\xspace}
\newcommand{\Hists}{\ensuremath{\mathsf{Hist}}\xspace}
\newcommand{\NGM}{GCGMP\xspace}
\newcommand{\NGMs}{GCGMPs\xspace}
\newcommand{\AcTCM}{TCM\xspace}
\newcommand{\automat}{\ensuremath{A}\xspace}
\newcommand{\automathalts}{\ensuremath{\automat\!\downarrow}\xspace}
\newcommand{\prop}[1]{#1}

\newcommand{\D}{\mathbb{D}}
\newcommand{\R}{\mathbb{R}}

\newcommand{\out}{\ensuremath{\mathsf{out}\xspace}}

\newcommand{\stat}{\ensuremath{\mathsf{St}\xspace}}
\renewcommand{\States}{\stat}
\newcommand{\dd}{\ensuremath{\mathsf{act}\xspace}}

\newcommand{\aga}{\ensuremath{\mathsf{a}\xspace}}
\newcommand{\agb}{\ensuremath{\mathsf{b}\xspace}}
\newcommand{\acta}{\ensuremath{\alpha\xspace}}
\newcommand{\one}{\ensuremath{\mathsf{1}\xspace}}
\newcommand{\agk}{\ensuremath{\mathsf{k}\xspace}}
\newcommand{\cgs}{\ensuremath{\mathcal{S}\xspace}}

\newcommand{\cut}[1]{}
\newcommand{\vcut}[1]{}

\newcommand{\keyterm}[1]{\emph{#1}}

\newcommand{\powerset}[1]{\mathcal{P}({#1})}

\newcommand{\lab}{\mathsf{L}}
\newcommand{\Props}{\ensuremath{\mathsf{Prop}}}   
\newcommand{\props}{\ensuremath{\mathsf{Prop}}}   

\renewcommand{\vec}[1]{\overrightarrow{#1}}
\newcommand{\cnf}{\ensuremath{\mathsf{Con}}}

\newcommand{\atlu}{\, \mathbf{U} \, }
\newcommand{\atlx}{\mathord \mathbf{X}}

\newcommand{\atlg}{\mathord \mathbf{G}}

\newcommand{\bigpaper}[1]{}
\newcommand{\ATL}{\Logicname{ATL}\xspace}
\newcommand{\ATLs}{\Logicname{ATL^*}\xspace}
\newcommand{\LTL}{\Logicname{LTL}\xspace}
\newcommand{\CTL}{\Logicname{CTL}\xspace}
\renewcommand{\NGL}{\Logicname{QATL}\xspace}
\renewcommand{\NGLs}{\Logicname{QATL^*}\xspace}
\newcommand{\up}{\vspace{0cm}}

\begin{document}
\maketitle

\vcut{LONG ABSTRACT

\begin{abstract}
Game theory and logic have developed two different traditions in studying
abilities of agents to achieve objectives in multi-player games. Game
theory has been studying abilities of rational players to achieve
\emph{quantitative objectives}: optimizing payoffs or, more generally, 
preferences on outcomes. On the other hand, logic has mainly been dealing with
abilities of players for achieving \emph{qualitative objectives} of players:
reaching or maintaining game states with desired properties, e.g., winning
states or safe states.  Put as a slogan, the former tradition is concerned with how a player can become maximally rich, or pay the least possible price, in
the game, while the latter tradition -- with how a player can achieve a
state of `happiness', or avoid reaching a state of `unhappiness', in the
game. This paper purports to combine the two agendas in a common logical
framework by enriching concurrent game models with payoffs for the normal
form games associated with the game states, thus enabling the combination
of quantitative and qualitative reasoning. Again  put as a slogan, our
framework enables one to reason about whether and how a player can reach or
maintain a state of `happiness' while becoming or remaining maximally
rich, or paying a minimal cost on the way.
\end{abstract}
}

\begin{abstract}
We propose a logical framework combining a game-theoretic study of abilities of agents to achieve \emph{quantitative objectives} in multi-player games by optimizing payoffs or preferences on outcomes with a logical analysis of the abilities of players for achieving \emph{qualitative objectives} of players, i.e., reaching or maintaining game states with desired properties. We enrich concurrent game models with payoffs for the normal form games associated with the states of the model and propose a quantitative extension of the logic \ATLs enabling the combination of quantitative and qualitative reasoning. 
\end{abstract}

\section{Introduction}

There are two rich traditions in studying strategic abilities of agents 
in multi-player games:

 \emph{Game theory}  has been studying rational behavior of players, relevant for their  achievement of \emph{quantitative objectives}: optimizing payoffs (e.g., maximizing rewards or minimizing cost) or, more generally, preferences on outcomes. Usually, the types of games studied in game theory are one-shot normal form games, their (finitely or infinitely) repeated versions, and extensive form games.

\emph{Logic}  has been mostly dealing with strategic abilities of players for achieving \emph{qualitative objectives}: reaching or maintaining outcome states with  desired properties, e.g., winning states, or safe states, etc.

Among the most studied models in the logic tradition are \emph{concurrent game models}~\cite{Alur02ATL,Pauly02modal}.  On the one hand they are richer than normal form games, as they incorporate a whole family of such games, each associated with a state of a transition system; but on the other hand, they are somewhat poorer because the outcomes of each of these normal form games, associated with a given state, are simply the successor states with their associated games, etc. whereas no payoffs, or even preferences on outcomes, are assigned. Thus, plays in concurrent game models involve a sequence of possibly different one-shot normal form games played in succession, and all that is taken into account in the purely logical framework are the properties -- expressed  by formulae of a logical language -- of the states occurring in the play.
Concurrent game models can also be viewed as generalization of (possibly infinite) extensive form games where cycles and simultaneous moves of different players are allowed, but no payoffs are assigned. 

Put as a slogan, the game theory tradition is concerned with \emph{how a player can become maximally rich, or how to pay as little cost as possible}, while the logic tradition -- with \emph{how a player can achieve a state of `happiness', e.g. winning, or to avoid reaching a state of `unhappiness' (losing) in the game}.

The most essential technical difference between qualitative and quantitative players' objectives is that the former typically refer to (a temporal pattern over) Boolean properties of game states on a given play and can be monitored locally 
whereas the latter are determined by the entire history of the play (accumulated payoffs) or even the whole play (its value, being a limit of average payoffs, or of discounted accumulated payoffs). It is therefore generally computationally more demanding and costly to design strategies satisfying quantitative objectives or to verify their satisfaction under a given strategy of a player or coalition.  


These two traditions have followed rather separate developments, with generally quite different agendas, methods and results, including, inter alia: 

\begin{itemize}
\itemsep = 0pt
\item on the purely qualitative side, \emph{logics of games and multiagent systems}, such as the Coalition logic CL \cite{Pauly02modal}, the Alternating time temporal logic ATL \cite{Alur02ATL}, and variations of it, see e.g. 
 \cite{Goranko04comparingKRA}, 
 \cite{Jamroga07constructive-jancl}, etc., formalizing and studying qualitative reasoning in  concurrent game models;  
\item  some \emph{single-agent and multi-agent bounded resource logics}  
\cite{BullingFarwer2010RAL-ECAI-short,AlechinaLNR11,MonicaNP11} 
 extending or modifying concurrent game models with some quantitative aspects by considering cost of agents' actions and reasoning about what players with bounded resources can achieve.  
 \item extensions of qualitative reasoning (e.g., reachability and B\"uchi objectives) in multi-player concurrent games with 'semi-quantitative' aspects by considering a preference preorder on the set of qualitative objectives, see e.g., \cite{BouyerBMU11}, \cite{BouyerBMU12}, thereby adding payoff-maximizing objectives and thus creating a setting where traditional game-theoretic issues such as game value problems and Nash equlibria become relevant.  
\item  deterministic or stochastic \emph{infinite games on graphs}, with qualitative objectives: typically, reachability, and more generally -- specified as $\omega$-regular languages over the set of plays, see e.g.  \cite{AlfaroHK07}, \cite{ChatterjeeAH11}, 
\cite{ChatterjeeH12}. 
\item on the purely quantitative side, first to mention \emph{repeated games}, extensively studied in game theory (see e.g., \cite{Osborne94gamet}), which can be naturally treated as simple, one-state concurrent game models with accumulating payoffs paid to each player after every round and no qualitative objectives;
\item from a more computational perspective, 
stochastic games with quantitative objectives on discounted, mean or total payoffs, in particular energy objectives, see e.g. \cite{ChatterjeeD12}.
\item the conceptually different but technically quite relevant study of \emph{counter automata, Petri nets, vector addition systems}, etc. -- essentially a study of the purely quantitative single-agent case of concurrent game models (see e.g. \cite{Esparza98decidabilityand}),  
\vcut{Blockelet2011model_checking_vas)} 
 where only accumulated payoffs but no qualitative objectives are taken into account and a typical problem is to decide reachability of payoff configurations satisfying formally specified arithmetic constraints from a given initial payoff configuration.   
\end{itemize}

A number of other relevant references discuss the interaction between qualitative and quantitative reasoning in multi-player games, e.g.  \cite{Pinchinat07}, \cite{GU08}, which we cannot discuss here due to space limitations. 
 

\cut{
These two traditions have been following rather separate developments, with their own -- generally quite different -- agendas, methods and results. The main directions of related research include: 
\begin{itemize}\up
\itemsep = 1pt
\item \emph{Repeated games} studied in game theory \cite{Osborne94gamet} can be naturally treated as simple, one-state concurrent game models with accumulating payoffs paid to each player after every round and no qualitative objectives. \up 

\item  \emph{Counter automata, Petri nets and vector addition systems}~ essentially study the purely quantitative single agent case of concurrent game models (see e.g. \cite{Esparza98decidabilityand,Blockelet2011model_checking_vas}), where only accumulated payoffs but no qualitative objectives are taken into account and a typical problem is to decide reachability of payoff configurations satisfying formally specified arithmetic constraints from a given initial payoff configuration.\up 

\item Computational aspects of deterministic or stochastic \emph{infinite games on graphs with qualitative objectives}, where the objectives are usually specified as $\omega$-regular languages over the set of plays, see e.g.  \cite{AlfaroH00}. Typically, games with reachability objectives objectives have been studied under this framework, as in  \cite{AlfaroHK07}. \up

\item  \emph{Logics of games and multiagent systems}, such as the Coalition Logic CL 
\cite{Pauly02modal}, the alternating time temporal logic ATL \cite{Alur02ATL}, and many variations of it, see e.g., 
\cite{Hoek03ATELstudialogica}, \cite{Jamroga07constructive-jancl}
 formalize and study the purely qualitative reasoning in  concurrent game models. \up
\item Some \emph{single-agent and multi-agent bounded resource logics}  \cite{BullingFarwer09rtl-clima-post,BullingFarwer2010RAL-ECAI-short,
AlechinaLNR11,MonicaNP11} extend or modify concurrent game models with some quantitative aspects by considering cost of agents' actions and reasoning about what players with bounded resources can afford to achieve.  
\end{itemize}
}

This project purports to combine the two agendas in a common logical framework, by enriching concurrent game models with payoffs for the one-shot normal form games associated with the states, and thus enabling the combination of quantitative game-theoretic reasoning with the qualitative logical reasoning. 
Again, put as a slogan, our framework allows reasoning about whether/how a player can reach or maintain a state of `happiness' while becoming, or remaining, as rich as (rationally) possible, or paying the least possible price on the way. 
The purpose of this extended abstract is to introduce and discuss a general framework of models and logics for combined quantitative and qualitative reasoning that would naturally cover each of the topics listed above, and to initiate 
a long term study on it. 
\vcut{
 inter alia focusing on identifying semantic and syntactic conditions under which the model checking problem in this framework is decidable. 
 }

\section{Preliminaries} 
\label{sect:prelim}


A \textbf{concurrent game model}~\cite{Alur02ATL} (CGM) 
$\cgs = (\Agents, \stat, \{\Actions_\aga\}_{\aga\in\Agents}, \{\dd_{\aga}\}_{\aga\in\Agents}, \out,\Props,\lab)$
comprises: 
\vcut{
}
\begin{itemize}\up
\itemsep = -3pt
\item a non-empty, fixed 
set of players $\Agents=\{1,\dots, k\}$ and a set of actions $\Actions_{\aga}\neq \emptyset$ for each $\aga\in\Agents$. \\
For any $A\subseteq\Agents$ we will denote $\Actions_A:=\prod_{\aga\in A}\Actions_{\aga}$ and will use $\vec{\acta}_{A}$ to denote a tuple from $\Actions_A$. 
 In particular,  $\Actions_\Agents$ is the set of all possible action profiles in $\cgs$. \up

\item a non-empty set of  game states $\stat$.
%
\vcut{
}
\item for each $\aga\in\Agents$ a map $\dd_{\aga} : \stat \rightarrow \powerset{\Actions_{\aga}}$ setting for each state $s$
the actions available to $\aga$ at $s$. 
\item a transition function $\out: \States\times\Actions_\Agents\,  \rightarrow \States$ that assigns the (deterministic) \keyterm{successor (outcome) state} $\out(q,\vec{\acta}_{\Agents})$ to every state $q$ and action profile $\vec{\acta}_{\Agents} = \langle\alpha_{\one}, \dots, \alpha_{\agk}\rangle$ such that $\alpha_{\aga} \in \dd_{\aga}(q)$ for every $\aga \in \Agents$ (i.e., every $\alpha_{\aga} $ that can be executed by player $\aga$ in state $q$).\up 
\item  a  set of atomic propositions $\Props$ and a labelling function $\lab:\stat\rightarrow \powerset{\Props}$.
%
\end{itemize}


Thus, all players in a CGM execute their actions synchronously and the combination of these actions, together with the current state, determines the transition to a (unique) successor state in the CGM.\smallskip

The \textbf{logic of strategic abilities \ATLs} (\emph{Alternating-Time Temporal Logic}), introduced and studied 
in \cite{Alur02ATL}, is a logical system, suitable for specifying and verifying qualitative objectives of players and coalitions in concurrent game models.
The main syntactic construct of \ATLs is a formula of type $\coop{C} \gamma$, intuitively meaning: 
\textit{``The coalition $C$ has a collective strategy to guarantee the satisfaction of the objective $\gamma$ on every play enabled by that strategy.''}
Formally, \ATLs is a multi-agent extension of the branching time logic CTL*, i.e., multimodal logic extending the linear-time temporal logic \LTL -- comprising the temporal operators $\atlx$ (``at the next state''), $\atlg$ (``always from now on'') and $\atlu$ (``until'') -- with \emph{strategic path quantifiers} $\coop{C}$ indexed with coalitions $C$ of players. There are two types of formulae of \ATLs, \emph{state formulae}, which constitute the logic and  that are evaluated at game states,  and \emph{path formulae}, that are evaluated on game plays. These are defined by mutual recursion with the following grammars, where $C\subseteq \Agents$, $\prop{p}\in\Props$: state formulae are defined by $\varphi::= \prop{p} \mid \neg \varphi \mid \varphi\wedge\varphi \mid \coop{C}\gamma$,
and path formulae by  $\gamma::=\varphi \mid \neg\gamma \mid \gamma\land\gamma \mid \atlx \gamma \mid \atlg \gamma \mid \gamma \atlu \gamma$. 



The logic \ATLs is very expressive and that comes at a high computational price: satisfiability and model checking are $\mathbf{2ExpTime}$-complete. A computationally better behaved fragment is the logic \ATL,  which is the multi-agent analogue of CTL, only involving state formulae defined by the following grammar, for $C\subseteq \Agents$, $\prop{p}\in\Props$:
$\varphi::= \prop{p} \mid \neg \varphi \mid \varphi\land \varphi \mid \coop{C} \atlx \varphi \mid \coop{C} \atlg \varphi \mid \coop{C}(\varphi \atlu \varphi)$. For this logic satisfiability and model checking are $\mathbf{ExpTime}$-complete and $\mathbf{P}$-complete, respectively. 
We will, however, build our extended logical formalism on the richer \ATLs because we will essentially need the path-based semantics for it.   
%

%

\smallskip

\textbf{Arithmetic Constraints. }
We  define a simple language of arithmetic constraints to express conditions about the accumulated payoffs of players on a given play. For this purpose, we use a set $V_{\Agents}=\{v_{\aga}\mid \aga\in \Agents\}$ of special variables to refer to the accumulated payoffs of the players at a given state and  denote by $V_A$ the restriction of $V_{\Agents}$ to any group $A\subseteq \Agents$. 
The payoffs can be integers, rationals\footnote{Note that models with rational payoffs behave essentially like models with integer payoffs, after once-off initial re-scaling.}, or any reals. We denote the domain of possible values of the payoffs, assumed to be a subset of the reals $\R$, by $\D$ and use a set of constants symbols $X$, with $0\in X$, for names of special real values (see further) 
to which we want to refer in the logical language. 

%
%
For fixed sets $X$ and  $A\subseteq \Agents$ we build the set $T(X,A)$ of \emph{terms over $X$ and $A$} from $X\cup V_A$ by applying addition, e.g. $v_a+v_b$. An evaluation of a term $t\in T(X,A)$ is a mapping $\eta:X\cup V_A\rightarrow \D$. We write $\eta\models t$ to denote that $t$ is \emph{satisfied} under the evaluation $\eta$. Moreover, if some order of the elements $X\cup V_A$ is clear from context, we also represent an evaluation as a tuple from $\D^{|A|+|V_A|}$ and often assume that elements from $X$ have their canonic interpretation.
The set $\AC(X,A)$ of \emph{arithmetic constraints} over $X$ and $A$ consists of all expressions of the form $t_1 * t_2$ where $* \in \{<, \leq, =, \geq, >\}$ and $t_1,t_2\in T(X,A)$. We use $\ACF(X,A)$ to refer to the set of Boolean formulae over $\AC(X,A)$; e.g. $(t_1<t_2)\wedge (t_2\geq t_3)\in\ACF(X,A)$ for $t_1,t_2,t_3\in T(X,A)$. We note that the language  $\ACF(X,A)$ is strictly weaker than Presburger arithmetic, as it involves neither quantifiers nor congruence relations. 
%
%
%

We also consider the set $\APC(X,A)$ of 
 \emph{arithmetic path constraints} being expressions of the type $w_{a} * c$ where $a \in \Agents$, $* \in \{<, \leq, =, \geq, >\}$ and $c \in X$. The meaning of $w_{a}$ is to represent the value of the current play for the player $a$. That value can be defined differently, typically by computing the accumulated payoff over the entire play, by using a future discounting factor, or by taking the limit -- if it exists -- of the mean (average) accumulated payoff (cf.~\cite{Osborne94gamet}).
 We note that the discounted, accumulated, mean or limit payoffs may take real values beyond the original domain of payoffs $\D$; so, we consider the domain for $X$ to be a suitable closure of  $\D$.     


 \bigpaper{
\begin{definition}[Arithmetic Constraints] For fixed sets $X$ and  $A\subseteq \Agents$: 

\begin{itemize}
\itemsep = 1pt
\item \emph{Terms over $X$ and $A$} are built from $X\cup V_A$ by applying addition. The set of these terms is denoted $T(X,A)$.

\item A \emph{basic arithmetic constraint} over $X$ and $A$ is an expression $t_1\leq t_2$ where $t_1,t_2\in T(X,A)$. The set of these basic arithmetic constraints is denoted $AC_{b}(X,A)$.

\item An \emph{arithmetic constraint} over $X$ and $A$ is any expression of the form 
$t_1 * t_2$ where $* \in \{<, \leq, =, \geq, >\}$ and $t_1,t_2\in T(X,A)$. The set of these arithmetic constraints is denoted $AC(X,A)$.

\item An \emph{arithmetic constraint formula} (over $X$ and $A$) is defined by the following grammar: $\alpha::= c \mid \neg \alpha \mid \alpha\wedge \alpha$ where $c\in AC_{b}(X,A)$. The set of these arithmetic constraints is denoted $\ACF(X,A)$. Typically, we denote arithmetic constraints or constraint formulae by $\alpha$. 

When $X$ and $A$ are clear from context or arbitrary we simply write $AC_{b},AC,\ACF$. 
 
Clearly, all arithmetic constraints are definable over basic arithmetic constraints by arithmetic constraint formulae.

Arithmetic constraint formulae have a standard interpretation in $(\D, \leq)$ once the elements of $X\cup V_{A}$ are evaluated there. 

\item 
An \emph{arithmetic path constraint} is an expression of the type $\beta_{a} * c$ where $a \in \Agents$, $* \in \{<, \leq, =, \geq, >\}$ and $c \in X$. The meaning of $\beta_{a}$
is to represent the value of the current play for the player $a$. That value can be defined differently, typically by computing the accumulated payoff using a future discounting factor or by taking the limit -- if it exists -- of the average of the accumulated payoff (cf.~\cite{Osborne94gamet}).

We use $\APCF(Y,A)$ to refer to all arithmetic path constraint formulae over $Y$ and $A$ or simply $\APCF$ if $X$ and $A$ are clear from context or arbitrary.
\end{itemize}
\end{definition}
}


\vspace{-0.3cm}

\section{Concurrent Game Models with Payoffs and Guards}\vspace{-0.3cm}

We now extend concurrent game models with 
utility values for every action profile applied at every state and with guards that determine which actions are available to a player at a given configuration, consisting of a state and a utility vector,   in terms of arithmetic constraints on the utility of that player.   

\begin{definition} 
A  \textbf{guarded CGM with payoffs (\NGM)} 
is a tuple 
$\model=(\cgs,  
\ut, \{g_{\aga}\}_{\aga\in\Agents}, \{d_{\aga}\}_{\aga\in\Agents})$ 
where $\cgs = (\Agents, \stat, \{\Actions_\aga\}_{\aga\in\Agents}, \{\dd_{\aga}\}_{\aga\in\Agents}, \out,\Props,\lab)$ is a CGM and: 
\vcut{
}
\begin{itemize}\up
\vcut{
%
%
}
\itemsep = 0pt
\item $\ut:\Agents\times \States\times \Actions_\Agents\rightarrow  \D$ is a payoff function assigning at every state $s$ and action profile applied at $s$ a payoff to every agent. 
We  write $\ut_{\aga}(s,\vec{\acta})$ for $\ut(\aga,s,\vec{\acta})$. \up
\item $g_{\aga}:\States\times \Actions_{\aga}\rightarrow \ACF(X,\{a\})$, for each player $\aga\in\Agents$,  is a guard function that assigns for each state $s\in \stat$ and action 
$\acta\in \Actions_{\aga}$
an arithmetic constraint formula $g_{\aga}(s,\acta)$ that determines whether  $\acta$ is available to $\aga$ at the state $s$ given the current value of $\aga$'s accumulated payoff. 
The guard must enable at least one action for $\aga$ at $s$. 
Formally, for each state $s\in\States$, the formula $\bigvee_{\acta\in\Actions_{\aga}} g_{\aga}(s,\acta)$ must be valid. Moreover, a guard $g_{\aga}(s,\acta)$ is called \emph{state-based} if $g_{\aga}(s,\acta)\in\ACF(X)$.
\up
\item $d_{\aga}\in[0,1]$ is a discount factor, for each $\aga\in\Agents$, used in order to define realistically values of infinite plays for players or to reason about the asymptotic behavior of players' accumulated payoffs. 

\end{itemize}
\end{definition}
 
 The guard $g_{\aga}$ refines the function $\dd_{\aga}$ from the definition of a CGM, which can be regarded as a guard function assigning to every state and action a constant arithmetic constraint \texttt{true} or \texttt{false}. 
In our definition the guards assigned by $g_{\aga}$ only depend on the current state and the current accumulated payoff of  $\aga$. The idea is that when the payoffs are interpreted as costs, penalties or, more generally, consumption of resources the possible actions of a player would depend on her current availability of utility/resources. 
%
%
 \bigpaper{
 Some comments: 
 
 \begin{itemize}
 \itemsep = 1pt
\item The guard $g_{\aga}$ refines the function $\dd_{\aga}$ from the definition of CGM, which can be regarded as a guard function assigning to every state and action a constant arithmetic constraint \texttt{true} or \texttt{false}. To avoid duplicating the role of $g_{\aga}$ and $\dd_{\aga}$ we hereafter assume that  $\dd_{\aga}(s) = \Actions_{\aga}$.  
\item In our definition the guards assigned by $g_{\aga}$ only depend on the current state and the current payoff of  $\aga$. The idea is that when the payoffs are interpreted as costs or, more generally, consumption of resources the possible actions of a player would depend on her current availability of utility/resources. In a more general framework the guards may take into account all players' current payoffs. 
\item For the sake of technical simplicity, the transition function $\out$ is defined for all action profiles, not only for those which are enabled at the given state by the respective guards applied to the current payoffs. 
\item The discount factors are used in order to define realistically values of infinite plays  
for players or to reason about the asymptotic behavior of players' accumulated payoffs. 
An important particular case is the one where the future is not discounted, i.e. $d_{\aga} = 1$ for each $\aga\in\Agents$.  This can be assumed e.g., when only reachability quantitative objectives are considered with respect to the accumulated payoffs. 
\end{itemize}
 

\vnote{Give a nice example}
}

\begin{example}
\label{ex}
Consider the \NGM shown in Figure~\ref{fig:game} with 2 players, I and II, and 3 states, where in every state each player has 2 possible actions, $C$ (cooperate) and $D$ (defect). The transition function is depicted in the figure.
\begin{figure}
\begin{center}
\includegraphics[scale=0.45]{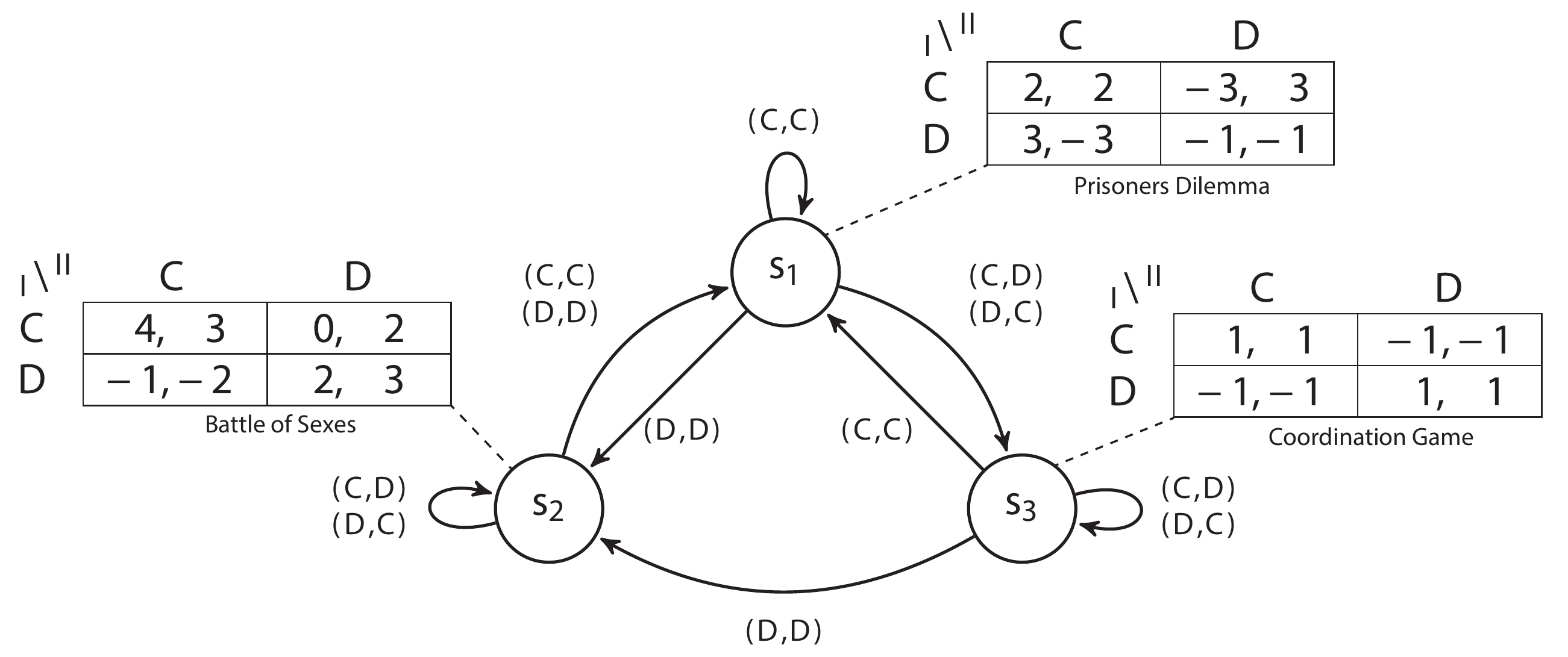}
\caption{A simple \NGM.}\label{fig:game}
\end{center}
\end{figure}
%
%
%
%
%
%
The normal form games associated with the states 
are respectively versions of the Prisoners Dilemma at state $s_{1}$, Battle of the Sexes at state $s_{2}$ and Coordination Game at state $s_{3}$. 

%
The guards for both players are defined at each state so that the player can apply any action if she has a positive current accumulated payoff, may only apply action $C$ if she has accumulated payoff 0; and  must  play an action maximizing her minimum payoff in the current game if she has a negative accumulated payoff. The discounting factors are 1 and the initial payoffs of both players are 0. 
\end{example} 


\textbf{Configurations, plays, and histories.} Let $\model$ be a \NGM defined as above. 
A \emph{configuration} (in $\model$) is a pair $(s,\vec{u})$ consisting of a state $s$ and a vector $\vec{u}=(u_1,\dots,u_k)$ of currently accumulated payoffs, one for each agent, at that state. Hereafter we refer to \emph{accumulated payoffs} as \emph{utility}, at a given state.
We define the set of possible configurations as $\cnf(\model)=\States\times\D^{|\Agents|}$. 
The \emph{partial configuration transition function} is defined as 
$\widehat{\out}: \cnf(\model) \times\Actions_\Agents\times\mathbb{N}\rightarrow \cnf(\model)$
 such that $\widehat{\out}((s,\vec{u}),\vec{\acta},l)=(s',\vec{u'})$ iff:  
 \begin{itemize}
 \itemsep = 0pt
\item[(i)]  $\out(s,\vec{\acta})=s'$ ($s'$ is a successor of $s$ if $\vec{\acta}$ is executed). 
\item[(ii)]  
\vcut{
$\widehat{u_a}\models g_{\aga}(s,\acta_{\aga})$  where $\widehat{u_a}$ is  of the that assigns $u_a$ to $v_a$ for each $\aga \in \Agents$
}
assigning the value $u_a$ to $v_a$ satisfies the guard $g_{\aga}(s,\acta_{\aga})$ for each $\aga \in \Agents$, i.e. $u_a\models g_{\aga}(s,\acta_{\aga})$ (each agent's move $\acta_{\aga}$ is enabled at $s$ by the respective guard $g_{\aga}$ applied to the current accumulated utility value $u_a$).
\item[(iii)] $u'_{\aga}=u_{\aga}+d_{\aga}^l\cdot \ut_{\aga}(s,\vec{\acta})$  for all $\aga\in\Agents$ (i.e., the utility values change according to the utility function and the discounting rate where $l$ denotes the number of steps that took place). 
\end{itemize}

A \NGM $\model$ with a designated initial configuration $(s_{0},\vec{u_0})$ 
gives rise to a 
 \emph{configuration graph on $\model$} consisting of all configurations in $\model$ reachable from $(s_{0},\vec{u_0})$ by the configuration transition function. 
%
%
\cut{
The next proposition shows that the players do always have a way to move.

\begin{proposition}For any \NGM the transition functions $\out$ and $\widehat{\out}$  are serial. \vnote{I think this can be omitted.} 
\end{proposition}
}
%
\vcut{
We note that for any \NGM the transition functions $\out$ and $\widehat{\out}$  are serial and 
define a play in an \NGM as an infinite sequence of interleaved configurations and actions.
}
%
A \emph{play} in a \NGM $\model$ is an infinite sequence $\pi=c_0 \vec{\acta_0},c_1 \vec{\acta_1},\dots$ from $(\cnf(\model)\times\Actions)^\omega$ such that $c_{n}\in\widehat{\out}(c_{n-1}, \vec{\acta}_{n-1})$ for  all $n>0$. The set of all plays in $\model$ is denoted by $\Plays_{\model}$. Given a play $\pi$ we use $\pi[i]$ and $\pi[i,\infty]$ to refer to the $i$th element and to the subplay starting in position $i$ of $\pi$, respectively.%
\bigpaper{Using the notation introduced in Sect.  
\ref{sect:prelim}, for each $i\geq 0$ we use $\pi[i]$ to refer the $i$th pair $(c_i,\vec{\acta_i})$ on $\pi$ and $\pi[i,\infty]=c_{i}\acta_{i},c_{i+1}\acta_{i+1},\dots$ to denote the subplay of $\pi$ starting from position $i$.
}
A \emph{history} is any finite initial sequence $h=c_0 \vec{\acta_0},c_{1}\acta_1, \dots, c_n\in (\cnf(\model)\times\Actions)^*\cnf(\model)$ of a play in $\Plays_{\model}$. 
\bigpaper{
Note that a history ends in a configuration. Sometimes we also assume that $\vec{\acta}_n$ is present and equals $\epsilon$ for technical reasons. 
}
The set of all histories is denoted by $\Hists_{\model}$. For any history $h$ we also define $h[i]$ as for plays and additionally $h[\mathit{last}]$ and $h[i,j]$ to refer to the last state on $h$ and to the sub-history between $i$ and $j$, respectively. 
\vcut{
Like for plays, we use the notation $h[i]$ and $h[i,j]$, $i\leq j$, $i<|h|$, to refer to the sub-history $h[i]\dots h[min(j,|h|-1)]$. We also allow $j=\infty$ and note that $h[\infty]$ refers to the last pair $(c_n,\epsilon)$. \\
\vnote{I suggest using $h[last]$ or $h[*]$  instead.}
}
Finally, we introduce functions $\cdot^c$,  $\cdot^u$, and $\cdot^s$ which denote the projection of a given play or history to the sequence of its  configurations, utility vectors, and states, respectively. For illustration, let us consider the play $\pi=c_0 \vec{\acta_0},c_1 \vec{\acta_1},\dots$. We have that $\pi[i,\infty]= c_i \vec{\acta_i},c_{i+1} \vec{\acta_{i+1}},\dots$; $\pi[i]=c_i\vec{\alpha}_i$; $\pi^c[i,\infty]= c_i,c_{i+1},\dots$;  $\pi^c[i]=c_i$; $\pi^a[i]=\vec{\acta_i}$; $\pi^u[i]=v_i$; and  $\pi^s[i]=s_i$  where $c_i=(s_i,\vec{u_i})$.

\begin{example}
\label{ex2}
Some possible plays starting from $s_{1}$ in Example \ref{ex} are given in the following where we assume that the initial accumulated payoff is $0$ for both agents. We note that this implies that the first action taken by any agent is always $C$.
\begin{enumerate}
\itemsep = 0pt
\item Both players cooperate forever: 
$(s_{1}, 0,0), (s_{1}, 2,2), (s_{1}, 4,4), \ldots$     
\item After the first round both players defect and the play moves to $s_{2}$, where player I chooses to defect whereas II cooperates. Then I must cooperate while II must defect but at the next round can choose any action, so a possible play is:  
$(s_{1}, 0,0), (s_{1}, 2,2), (s_{2}, 1,1), (s_{2}, 0,-1), (s_{2}, 0,1), (s_{2}, 0,3), 
(s_{2},0,5), \ldots$
\item After the first round player I defects while II cooperates and the play moves to $s_{3}$, where they can get stuck indefinitely, until -- if ever -- they happen to coordinate, so a possible play is: \\ 
$(s_{1}, 0,0), (s_{1}, 2,2), (s_{3}, 5,-2), (s_{3}, 4,-3), (s_{3}, 3,-4), \ldots (s_{3}, 0,-7), (s_{3}, -1,-8), \ldots$. 

Note, however, that once player I reaches accumulated payoff 0 he may only apply $C$ at that round, so if player II has enough memory or can observe the accumulated payoffs of I he can use the opportunity to coordinate with I at that round  by cooperating, thus escaping the trap at $s_{3}$ and making a sure transition to $s_{2}$. 
\item If, however, the guards did not force the players to play C when reaching accumulated payoffs 0, then both players could plunge into an endless  misery if the play reaches $s_{3}$. 
\end{enumerate}
\end{example}
   


\textbf{Strategies. }A \emph{strategy} of a player $\aga$ is a function $s_{\aga} : \Hists\rightarrow \Actions$ such that if $s_{\aga}(h)=\alpha$ then $h^u[last]_\aga\models g_{\aga}(h^s[last],\alpha)$; that is, actions prescribed by a strategy must be enabled by the guard. 
%
Our definition of strategy is based on histories of configurations and actions, so it extends the notion of strategy from ~\cite{Alur02ATL} where it is defined on histories of states, and includes strategies, typically considered e.g. in the study of repeated games, where often strategies prescribe to the player an action dependent on the previous action, or history of actions, of the other player(s). Such are, for instance, \textsc{Tit-for-tat} or \textsc{Grim-trigger} in repeated Prisoners Dillemma; likewise for various card games, etc.  
Since our notion of strategy is very general, it easily leads to undecidable model checking problems. So, we also consider some natural restrictions, such as: \emph{state-based}, \emph{action-based} or \emph{configuration-based}, \emph{memoryless}, \emph{bounded memory}, of \emph{perfect recall} strategies 
\footnote{We note that all strategies need to be consistent with the guards, so state-based strategies are only applicable in models where the guards only take into account the current state, but not the accumulated payoffs.}.
Here we adopt a generic approach and assume that two classes of strategies $\mathcal{S}^p$ and $\mathcal{S}^o$  are fixed as parameters, with respect to which the proponents and opponents select their strategies, respectively. The proponent coalition $A$ selects a $\mathcal{S}^p$-strategy $s_A$ (i.e. one agreeing with the class $\mathcal{S}^p$) while the opponent coalition $\Agents\backslash A$ selects a $\mathcal{S}^o$-strategy $s_{\Agents\backslash A}$. The outcome play 
$\mathsf{outcome\_play}_{\model}(c,(s_A,s_{\Agents\backslash A}),l)$ 
 in a given \NGM $\model$ determines the play emerging from the execution of the (complete) strategy profile $(s_A,s_{\Agents\backslash A})$ from configuration $c$ in $\model$. 
 
\bigpaper{
\begin{itemize}
\itemsep = 1pt
\item in the study of repeated games, often strategies prescribe to the player an action dependent on the previous action, or history of actions, of the other player(s), for instance \textsc{Tit-for-tat} or \textsc{Grim-trigger} in repeated Prisoners Dilemma; likewise for various card games. 
\item in gambling the strategy of a player, 
would naturally depend on his current availability of money, or even on the history of his games and losses. 
\item \vnote{what else?} 
\end{itemize}

However, this notion of strategy is very general and can easily lead to undecidable model checking problems, as we will show later. So, we also consider some important restrictions. In the following we use $h, h'\in \Hists$ to refer to two arbitrary histories.

\begin{description}
\item[State-based strategies (s).] Only the game states 
along a history are taken into account: $s_{\aga}$ is \emph{state-based} iff $s_{\aga}(h')=s_{\aga}(h)$ whenever $h^s=h'^s$.
\item[Configuration-based strategies (c).] Only the configurations along a history are taken into account: $s_{\aga}$ is \emph{configuration-based} iff $s_{\aga}(h')=s_{\aga}(h)$ whenever $h^c=h'^c$.
%
%
\end{description}

Another dimension along which strategies can be classified is the amount of memory needed to represent them (cf.~\cite{Osborne94gamet}).

\begin{definition}[Strategy automaton]
Define automaton for representing strategies
\end{definition}

\vnote{to discuss these versions:}
\begin{description}
\itemsep = 1pt
\item[State-based memoryless strategies (sm)] \vnote{Added:} 
Only the current  state is taken into account: $s_{\aga}$ is \emph{state-based memoryless} iff $s_{\aga}(h')=s_{\aga}(h)$ whenever \ldots \vnote{to complete, if we agree to add this}
\item[Positional strategies (p)] \vnote{Added:} 
Only the current  configuration 
 \vnote{or, only the state and the player's payoff?} 
is taken into account: $s_{\aga}$ is \emph{positional} iff 
$s_{\aga}(h')=s_{\aga}(h)$ whenever \ldots \vnote{to complete, if we agree to add this}
\item[Memoryless strategies (r)] Only the last action and state 
\vnote{configuration?} 
are taken into account: $s_{\aga}$ is \emph{memoryless} iff $s_{\aga}(h')=s_{\aga}(h)$ whenever $h=\dots \vec{\acta}_{n-1}c_n$, $h'=\dots \vec{\acta}_{m-1}c_m$, $\vec{\acta}_{n-1}=\vec{\acta}_{m-1}$, and $c_n=c_m$.
\item [$k$-bounded strategies (b$^k$)] $s_{\aga}$ is said to be \emph{$k$-bounded} with $k$ being a natural number iff there is a strategy automaton with $k$ states representing $s_{\aga}$.
\end{description}

\begin{definition}[Restricted strategy]\label{def:restricted-strat}
\vnote{to update} 
We use the notation \emph{$X$-strategy} where $X\subseteq\{s,c,r\}\cup\{b^k\mid k\in\mathbb{N}\}$ to refer to a strategy restricted by the properties specified by $X$. Instead of $\{x_1,x_2,\dots\}$-strategy we also write $x_1x_2\dots$-strategy. Moreover, for technical reasons we assume that $b\in X$ whenever $b^k\in X$ for some $k\in\mathbb{N}$.
\end{definition}

We also allow combinations of these special types of strategies. In the following we discuss the expressivity of these different strategies.

\nnote{ALL RESULTS are just intuitions up to now!!!!!}
\begin{proposition}
Let $\model$ be a \NGM. Every sr-strategy is a $b^{|\States|}$-strategy but not the other way around.  \nnote{Discuss more options}
\end{proposition}
}
\vspace{-0.3cm}

\section{The Logic: Quantitative ATL*}\vspace{-0.3cm} 

We now extend the logic \ATLs to the logic \NGLs with atomic quantitative objectives being state or path arithmetic constraints 
over the players' accumulated payoffs. The semantics of \NGLs naturally extends the semantics of \ATLs over \NGMs, but parameterised with the two classes of strategies  $\mathcal{S}^p$ and $\mathcal{S}^o$.  

\begin{definition}[The logic \NGLs]
The language of \NGLs consists of \emph{state formulae} $\varphi$, which constitute the logic, and \emph{path formulae}  $\gamma$, 
generated as follows, where $A\subseteq\Agents$, $\mathsf{ac}\in \AC$, $\mathsf{apc}\in\APC$, and $p\in \props$: \vspace{-0.5cm} 
\[\varphi::= {p} \mid \mathsf{ac}\mid \neg \varphi \mid \varphi\wedge\varphi
  \mid \coop{A}\gamma \ \mbox{and} \  
\gamma::=\varphi \mid \mathsf{apc}\mid\neg\gamma \mid \gamma\land\gamma \mid
  \Next\gamma \mid \Always\gamma\mid\gamma\Until\gamma.\]\vspace{-0.6cm}
%
%
\vcut{
The ``sometime'' operator can be defined as $\Sometm\gamma \equiv\true\Until\gamma$. 
}
%

Let $\model$ be a \NGM, $c$ a configuration,  $\varphi, \varphi_1, \varphi_2$ state-formulae, $\gamma,\gamma_1, \gamma_2$ path formulae, and 
$l\in\mathbb{N}$.  Further, let $\mathcal{S}^p$ and $\mathcal{S}^o$ be two classes of strategies as described above.
The semantics of the path constraints is specified according to the limit-averaging or discounting mechanism adopted for computing the value of a play for a player. Then the truth of a $\NGLs$ formula at a position of a configuration in $\model$ is defined by mutual recursion on state and path formulae as follows: 
\vspace{-0.05cm}
\begin{description}
\itemsep = 1pt
\item[$\model,c,l\models{p}$] for $p\in \props$ iff ${p}\in \lab(c^s)$;  \ $\model,c,l\models{\mathsf{ac}}$ for $\mathsf{ac}\in\AC$ iff $c^u\models \mathsf{ac}$,
\vcut{
\item[$\model,c,l\models{\alpha}$] iff $c^u\models \alpha$ and $\alpha\in\AC$,\up 
\item[$\model,c,l\models\varphi_1\wedge \varphi_2$] iff $\model,c,k\models\varphi_1$ and \ $\model,c,l\models\varphi_2$, \up
\item[$\model,c,l\models\neg \varphi$] iff not $\model,c,l\models \varphi$, \up
}
\item[$\model,c,l\models \coop{A}\gamma$] iff 
there is a collective $\mathcal{S}^p$-strategy $s_A$ for $A$ such that for all  collective $\mathcal{S}^o$-strategies $s_{\Agents\backslash A}$ for $\Agents\backslash A$ we have that $\model,{\mathsf{outcome\_play}}^{\model}(c,(s_A,s_{\Agents\backslash A}),l),l\models\gamma$.\up

%
%
\item[$\model,\pi,l\models \varphi$] iff  $\model,\pi[0],l\models \varphi$; \ $\model,\pi,l\models \mathsf{apc}$ iff  $\pi^u,l\models \mathsf{apc}$ for $\mathsf{apc}\in\APC$. 
\vcut{
\item[$\model,\pi,l\models \beta$] iff  $\pi^u,l\models \beta$ for $\beta\in\APC$. 
where the semantics of path constraints is specified separately, according to the limit-averaging or discounting mechanism adopted for computing the value of a play for a player.\up}
%
\item[$\model,\pi,l\models \Always\gamma$] iff $\model,\pi[i],l+i\models \gamma$ for all $i\in\mathbb{N}_0$,  \up
\item[$\model,\pi,l\models \Next\gamma$] iff $\model,\pi[1],l+1\models \gamma$, \up
\item[$\model,\pi,l\models  \gamma_1\Until \gamma_2$] iff there is $j\in\mathbb{N}_0$ such that $\model,\pi[j],l+j\models \gamma_2$ and $\model,\pi[i],l+i\models \gamma_1$ for all $0\leq i< j$.
%

%
\end{description}
Ultimately, we define $\model,c\models\varphi$ as $\model,c,1\models\varphi$. Moreover, if not clear from context, we also write $\models_{(\mathcal{S}^p,\mathcal{S}^o)}$ for $\models$.
\end{definition}

The semantics presented above extends the standard semantics for \ATLs and  is amenable to various refinements and restrictions, to be studied further. For instance, if appropriate, an alternative semantics can be adopted, based on irrevocable strategies \cite{Agotnes07irrevocable} or, more generally, on strategy contexts \cite{Brihaye08strategycontexts}  or other mechanisms for strategy commitment and release \cite{Jamroga08commitment-tr}. Also, the nested operators as defined here access the accumulated utility values and require plays to be infinite. Similarly to \cite{BullingFarwer2010RAL-ECAI-short}, \vcut{Blockelet2011model_checking_vas} one can consider variants of these settings  which may yield  decidable model checking and better complexity results.

%
%

%
As the logic \NGLs extends \ATLs, it allows expressing all purely qualitative \ATLs properties.
 It can also express purely quantitative properties, e.g.:  $\coop{\{\aga\}}\Always (v_{\aga} > 0)$ 
meaning ``Player $\aga$ has a strategy to maintain his accumulated payoff to be always positive'', or $\coop{A}(w_a \geq 3)$ meaning ``The coalition $A$ has a strategy that guarantees the value of the play for player $\aga$ to be at least 3''. Moreover, \NGLs can naturally express combined qualitative and quantitative properties, e.g. 
$\coop{\{\aga,\agb\}}((v_{\aga} + v_{\agb} \geq 1) \Until p)$), etc. 

\begin{example}
\label{ex3}
The following \NGLs state formulae are true at state $s_{1}$ of the \NGM in Example \ref{ex}, where $p_{i}$ is an atomic proposition true only at state $s_{i}$, for each $i=1,2,3$:  \\
(i) \ \ $\coop{\{I,II\}}  \Sometm(p_{1} \land v_{I} > 100 \land v_{II} > 100) 
\land   
\coop{\{I,II\}} \Next \Next \coop{\{II\}} (\Always(p_{2} \land v_{I} = 0) \ \land \ \Sometm\ v_{II} > 100)$. \\ 
(ii) \ \  $\lnot \coop{\{I\}} \Always (p_{1} \lor v_{I} > 0) \land \lnot \coop{\{I,II\}} \Sometm (p_{3} \land \Always (p_{3} \land (v_{I }+ v_{II} > 0))) $.
\end{example}

\vcut{
\begin{enumerate}
\itemsep = 0pt
\item  $\coop{I,II}  \Sometm(v_{I} > 100 \land v_{II} > 100)$  
\item  $\coop{I,II} \Next \Next \coop{II} (\Always\ v_{I} = 0 \ \land \ \Sometm\ v_{II} > 100)$ 
\item $\lnot \coop{I} \Always v_{I} > 0$
\item  $\coop{I,II} \Next \Next \lnot \coop{I} (\Sometm\ v_{I} \geq 0)$ 
\end{enumerate}
}

\vcut{
\section{Summary of preliminary results}  

Generally, the \NGM models are too rich and the language of \NGLs is too expressive to expect computational efficiency, or even decidability, of either model checking or satisfiability. Some preliminary results and related work show that model checking of \NGLs in \NGMs is undecidable under rather weak assumptions, e.g. if the proponents or the opponents can use memory-based strategies.  On the other hand, there are some natural semantic and syntactic restrictions of  \NGLs where decidability can be restored. These include restrictions similar to ones in~\cite{BullingFarwer2010RAL-ECAI-short}, as well as enabling only memoryless strategies, imposing non-negative payoffs, or constraints on the transition graph of the model, etc.  For lack of space we only sketch some results in an appendix; more will appear in a full paper under preparation. 
}

\section{(Un)Decidability: Related Work and Some Preliminary Results}  Generally, the \NGM models are too rich and the language of \NGLs is too expressive to expect computational efficiency, or even decidability, of either model checking or satisfiability testing. Some preliminary results and related work show that model checking of \NGLs in \NGMs is undecidable under rather weak assumptions, e.g. if the proponents or the opponents can use memory-based strategies. These undecidability results are not surprising as \NGMs are closely related to Petri nets and vector addition systems and it is known that  model checking over them is generally undecidable. In~\cite{Esparza_decidabilityof}, for example, this is shown for fragments of \CTL and (state-based) \LTL over Petri nets. Essentially, the reason is that the logics allow to encode a ``test for zero''; for Petri nets this means to check whether a place contains a token or not. In our setting  undecidability follows for the same reason, and we will sketch some results below. 
%
\vspace{-0.3cm}

\paragraph{Undecidability results.}
The logic $\NGL$ restricts $\NGLs$ in the same way as \ATL restricts \ATLs, due to lack of space we skip the formal definition. As a first result we show that model checking $\NGL$ is undecidable even if only the proponents are permitted to use perfect recall strategies and the opponents are bound to memoryless strategies. More formally, let $\mathsf{S}^{pr}$ denote the class of perfect recall state-based strategies and $\mathsf{S}^{m}$ the class of memoryless state-based strategies. That is, strategies of the former class are functions of type $\States^*\rightarrow\Actions$ and of the latter class  functions of type $\States\rightarrow \Actions$. 

Undecidability can be shown using ideas from e.g.~\cite{BullingFarwer2010RAL-ECAI-short,Esparza_decidabilityof}.
Here, we make use of the construction of~\cite{BullingFarwer2010RAL-ECAI-short} to illustrate the undecidability by simulating a two-counter machine (\AcTCM). A \AcTCM~\cite{HopcroftUllmann79Automata} can be considered as a transition system equipped with two integer counters that enable/disable transitions. Each  step of the machine depends on the current state, symbol on the tape, and the  counters, whether they are zero or not. After each step the counters can be incremented ($+1$), or decremented ($-1$) , the latter only if the respective counter is not zero. A \AcTCM  is essentially a (nondeterministic) push-down automaton with two stacks and exactly two stack symbols (one of them is the initial stack symbol) and has the same computation power as a Turing machine (cf.\ \cite{HopcroftUllmann79Automata}). 
A \emph{configuration} is a triple $(s,w_1,w_2)$ describing the current state ($s$), the value of counter 1 ($w_1$) and of counter 2 ($w_2$). A \emph{computation} $\delta$ is a sequence of subsequent configurations effected by transitions. 

For the simulation, we associate each counter with a player. The player's accumulated payoff encodes the counter value; actions model the increment/decrement of the counters; guards ensure that the actions respect the state of the counters. The accepting states of the two-counter machine are encoded by a special proposition $\prop{halt}$. Now, the following lemma stating the soundness of the simulation can be proved:

\begin{lemma}[Reduction]\label{lemma:reduction}
For any two-counter machine $A$ we can construct a finite  \NGM $\model^A$ with two players and  proposition $\mathsf{halt}$ such that the following holds: $A$ halts on the empty input iff $\model^A$ contains a play $\pi$ with $\pi^c=(s^0,(v_1^0,v_2^0))(s^1,(v_1^1,v_2^1))\dots$ such that there exists $j\in\mathbb{N}$ with $\mathsf{halt}\in \lab(s^j)$.
\end{lemma}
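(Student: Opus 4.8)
The plan is to reduce the halting problem for two-counter machines (\AcTCM), which is undecidable already for deterministic machines, to the existence of a $\mathsf{halt}$-reaching play. So I fix a deterministic \AcTCM $A$ with states $Q$, initial state $q_0$, a distinguished set of halting states, and the two standard instruction types, $\mathtt{INC}(r,q')$ (increment counter $r\in\{1,2\}$ and move to $q'$) and $\mathtt{JZDEC}(r,q',q'')$ (if counter $r$ is zero move to $q'$, otherwise decrement $r$ and move to $q''$). First I would build $\model^A$ by taking $\States=Q$, two players $1,2$ associated with the two counters, payoff domain $\D=\R$ (integer values suffice), and discount factors $d_1=d_2=1$, so that since $d_\aga^{l}=1$ the utility update $u'_\aga=u_\aga+d_\aga^{l}\cdot\ut_\aga(s,\vec{\acta})$ reduces to $u'_\aga=u_\aga+\ut_\aga(s,\vec{\acta})$ and each player's accumulated utility equals the current value of its counter throughout any play. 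The proposition $\mathsf{halt}$ labels exactly the halting states of $A$, and the initial configuration is $(q_0,(0,0))$, matching the empty input.

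The encoding of an instruction at a state $q$ makes one player \emph{active} and the other \emph{idle}: the idle player is given a single action of payoff $0$ with guard \texttt{true}, so it is always enabled and leaves its counter fixed. For $\mathtt{INC}(r,q')$ the active player $r$ has one action of payoff $+1$ with guard \texttt{true}, and $\out$ sends $q$ to $q'$. The crux is $\mathtt{JZDEC}(r,q',q'')$, which needs a test for zero: here the active player $r$ is given a jump action $\acta_z$ with guard $v_r\leq 0$, payoff $0$ and $\out$-target $q'$, and a decrement action $\acta_d$ with guard $v_r>0$, payoff $-1$ and $\out$-target $q''$. Every halting or otherwise stuck state receives an idle self-loop so that all plays are infinite. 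The key design choice is that $v_r\leq 0$ and $v_r>0$ are mutually exclusive and jointly exhaustive over $\R$: this makes $\bigvee_{\acta}g_{\aga}(q,\acta)$ valid, as the \NGM definition demands, while on the reachable part of the configuration graph the pair still behaves as a genuine zero-test; moreover exactly one of the two is enabled at any reachable configuration, so the play emanating from $(q_0,(0,0))$ is unique.

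The correctness argument I would give is an induction establishing the invariant that, along this unique play, the reached configurations are exactly $(s,(w_1,w_2))$ for the successive \AcTCM configurations $(s,w_1,w_2)$, with $w_1,w_2\geq 0$ at every step. Nonnegativity is preserved because increments and jumps keep a counter $\geq 0$, while a decrement fires only under its guard $v_r>0$, so its result is again $\geq 0$; and because all utilities stay nonnegative, $v_r\leq 0$ holds iff $w_r=0$, so $\acta_z$ is enabled precisely when counter $r$ is zero and $\acta_d$ precisely when it is positive. Combined with $d_\aga=1$ making accumulated utility track the counter exactly, each \AcTCM step corresponds to one application of $\widehat{\out}$ and conversely, so the play faithfully simulates the computation. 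Hence $A$ halts on the empty input iff the play reaches a state labelled $\mathsf{halt}$, i.e.\ iff there is $j\in\nat$ with $\mathsf{halt}\in\lab(s^j)$. For a nondeterministic $A$ the same construction yields a branching configuration graph whose plays are in bijection with computations, and the iff follows by applying the invariant to a halting branch.

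I expect the main obstacle to be the tension inherent in the zero-test: the guards must simultaneously realise ``counter $=0$ versus counter $>0$'' on reachable configurations and satisfy the global validity requirement over the \emph{whole} domain $\D$, not merely over reachable values. Resolving this hinges on choosing the complementary guards $v_r\leq 0$, $v_r>0$ and then feeding the nonnegativity invariant back into the simulation so that $v_r\leq 0$ coincides with $v_r=0$ where it matters. The remaining work — specifying $\out$ and $\ut$ on all action profiles and verifying the step-by-step correspondence — is routine bookkeeping.
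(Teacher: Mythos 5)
Your construction is correct and takes essentially the same approach as the paper: each counter value is encoded as a player's accumulated payoff, payoff-dependent guards realise the zero-test, and the halting states carry the proposition $\mathsf{halt}$. The only inessential difference is that the paper's underlying construction (borrowed from the cited resource-logic work) routes each machine transition through an intermediate ``test'' state, whereas you place the complementary guards $v_r\leq 0$ and $v_r>0$ directly on the actions at the machine state; both satisfy the guard-validity requirement and support the same nonnegativity invariant.
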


The next theorem gives two cases for which the model checking problem is undecidable. By the previous Lemma we have to ensure that the halting state is reached which can be expressed by $\coop{1}\Sometm\mathsf{halt}$. We can also use  purely state-based guards and encode the consistency checks in the formula as follows: $\coop{1}(v_1\geq 0\wedge v_2\geq 0\wedge e_1\rightarrow v_a=0\wedge e_2\rightarrow v_2=0)\Until\mathsf{halt}$ where the proposition $\prop{e_i}$ is added to the model to indicate that the value of counter $i$ is zero. Not that this information is static and obtained from the transition relation of the automaton. 

\begin{proposition}\label{cor:und1}
Model checking the logic $\NGL$ is undecidable, even for the 2 agent case and no nested cooperation modalities, where $\mathcal{S}^{p}=\mathsf{S}^{pr}$ and $\mathcal{S}^{o}=\mathsf{S}^{m}$. This does even hold either for formulae not involving arithmetic constraints, or for state-based guards.
\end{proposition}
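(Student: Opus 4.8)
The plan is to reduce the halting problem for two-counter machines on the empty input --- undecidable by a classical result (cf.~\cite{HopcroftUllmann79Automata}) --- to the model checking problem for $\NGL$ under the strategy classes $\mathcal{S}^{p}=\mathsf{S}^{pr}$ and $\mathcal{S}^{o}=\mathsf{S}^{m}$. Given a \AcTCM $A$, Lemma~\ref{lemma:reduction} already supplies a finite two-player \NGM $\model^{A}$ in which counter~$1$ and counter~$2$ are encoded by the accumulated payoffs $v_{1}$ and $v_{2}$ of the two players, increments and decrements are realised by the payoff function, and a distinguished proposition $\mathsf{halt}$ marks the accepting states. It remains to exhibit, for each of the two announced sub-cases, a single $\NGL$ formula $\Phi$ such that $\model^{A}, c_{0} \models \Phi$ at the designated initial configuration $c_{0}$ if and only if $A$ halts, and to verify that $\Phi$ uses neither nested cooperation modalities nor more than the two agents.

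For the first sub-case we keep the arithmetic inside the guards: the guards of $\model^{A}$ are payoff-dependent and enforce the counter discipline, i.e. they forbid decrementing a zero counter and gate the zero-test transitions on $v_{i}=0$. Then every play of $\model^{A}$ is automatically a faithful computation of $A$, and by Lemma~\ref{lemma:reduction} the existence of an accepting computation is equivalent to the existence of a play reaching $\mathsf{halt}$. The objective is therefore the purely qualitative formula
\[
\Phi_{1} \;=\; \coop{1}\Sometm\mathsf{halt},
\]
which contains no arithmetic constraint at all. For the second sub-case we make the guards state-based, so the model no longer rejects illegal counter behaviour; instead the consistency checks are moved into the formula. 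Adding static propositions $\prop{e_{1}}, \prop{e_{2}}$ that mark, from the transition relation of $A$ alone, the states performing a zero-test on the respective counter, the objective becomes
\[
\Phi_{2} \;=\; \coop{1}\bigl( v_{1}\geq 0 \wedge v_{2}\geq 0 \wedge (e_{1}\rightarrow v_{1}=0) \wedge (e_{2}\rightarrow v_{2}=0)\bigr)\Until\mathsf{halt},
\]
so that the $\Until$ prefix filters out exactly the plays that violate non-negativity or cheat on a zero-test, while still insisting that $\mathsf{halt}$ be reached. In both cases the formula is flat and mentions only players $1$ and $2$.

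The crux is to justify the strategic reading of the path quantifier: the Reduction Lemma speaks of the mere \emph{existence} of a halting play, whereas $\coop{1}$ asserts that player~$1$ has a perfect-recall strategy forcing the objective against \emph{every} memoryless strategy of player~$2$. The forward direction is routine --- an accepting computation of $A$ is turned into a history-dependent strategy of player~$1$ that replays the machine's nondeterministic choices step by step (perfect recall being exactly what lets the strategy track its position in the computation), reaching $\mathsf{halt}$, respectively maintaining the arithmetic invariant until $\mathsf{halt}$. The substantive step is the converse: one must argue that the opponent cannot sabotage the simulation, i.e. that the construction of $\model^{A}$ leaves player~$2$ with no move that diverts the play from a correct encoding of the counters while still satisfying the objective. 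This is where the restriction $\mathcal{S}^{o}=\mathsf{S}^{m}$ is exploited --- the opponent's inability to condition on the history means any attempt to derail the computation is either blocked by the guards (sub-case one) or detected by the invariant in the $\Until$ prefix (sub-case two), so that $\model^{A}, c_{0}\models\Phi_{i}$ forces an accepting computation of $A$. Establishing this sabotage-freeness precisely, carried over from the encodings of~\cite{BullingFarwer2010RAL-ECAI-short,Esparza_decidabilityof}, is the main obstacle; the rest follows by composing the two equivalences with Lemma~\ref{lemma:reduction} and the undecidability of the \AcTCM halting problem.
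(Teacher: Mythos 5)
Your proposal follows exactly the paper's own argument: reduce from the halting problem for two-counter machines via Lemma~\ref{lemma:reduction}, and for the two sub-cases use respectively payoff-dependent guards with the purely qualitative formula $\coop{1}\Sometm\mathsf{halt}$, and state-based guards with the consistency checks moved into the formula $\coop{1}(v_1\geq 0\wedge v_2\geq 0\wedge (e_1\rightarrow v_1=0)\wedge (e_2\rightarrow v_2=0))\Until\mathsf{halt}$. The remaining issue you flag (that the opponent cannot derail the simulation, so that existence of a halting play coincides with existence of a winning perfect-recall strategy) is also left at sketch level in the paper, so your write-up matches it in both approach and level of detail.
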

\vspace{-0.5cm}

\paragraph{Restoring decidability.}
There are some natural semantic and syntactic restrictions of  \NGLs where decidability may be restored; these include for instance, the enabling of only memoryless strategies, imposing non-negative payoffs,  constraints on the transition graph of the model, bounds on players utilities etc. For instance, the main reason for the undecidability result above is the possibility for negative payoffs that allow for decrementing the accumulated payoffs and thus simulating the TCM operations. Therefore, a natural restriction in the quest for restoring decidability is to consider only  \NGM models with non-negative payoffs. In this case the accumulated payoffs increase monotonically over every play of the game, and therefore the truth values of every arithmetic constraint occurring in the guards and in the formula eventually stabilize in a computable way, which in the long run reduces the model checking of any $\NGL$-formula in an  \NGM to a model checking of an \ATL-formula in a CGM. One can thus obtain decidability of the model checking of the logic $\NGL$ in finite \NGM with non-negative payoffs and perfect information. We will discuss these and other decidability results in a future work, where we will also consider restrictions similar to~\cite{BullingFarwer2010RAL-ECAI-short}.

\vcut{
We therefore obtain the following: 

\begin{proposition}\label{prop1}
Model checking the logic $\NGL$ is decidable in any finite \NGM with non-negative payoffs. 
\end{proposition}
}

\vspace{-0.3cm}

\section{Concluding Remarks}\vspace{-0.3cm} This paper proposes a long-term research agenda bringing together issues, techniques and results from several research fields. It aims at bridging the two important aspects of reasoning about objectives and abilities of players in multi-player games: quantitative and qualitative, and eventually providing a uniform framework for strategic reasoning in multi-agent systems. 
\smallskip

\textbf{Acknowledgements:} We thank the anonymous referees for detailed and helpful comments and additional references. 
\vspace{-0.3cm}

\bibliographystyle{eptcs}
\bibliography{generic.bib}

\appendix

\vcut{\appendix

\nnote{Todo}We firstly consider two undecidability results. In~\cite{Esparza94onthe} it is shown that model checking the weak branching time logic $UB^-$ (i.e. CTL with operator $EF$) over Petri nets is undecidable. This result does even hold in  case of no atomic propositions beside $\top$ (true). However, over restricted classes of Petri nets fragments turn out to be decidable (e.g. over nets whose set of reachable markings is effectively semilinear)~\cite{Esparza_decidabilityof}. Basic Parallel Processes (BPPs) and conflict-free Petri nets are models which fall into this class of restricted models. For the logic $UB$ (i.e. $UB^-$ enriched with operator $EG$), however, the model checking problem is undecidable already over BPPs~\cite{Esparza94decidabilityissues}. The authors of the latter paper conclude that \begin{quote}``no natural and useful branching time temporal logic for Petri nets seems to be decidable.''~\cite{Esparza94decidabilityissues}\end{quote}

The linear-time temporal logic LTL without atomic predicates (defined over configurations!) is shown to be decidable over Petri nets~\cite{Esparza94onthe}. However, if one adds propositions of type $s=0$ with the meaning that there are no tokens on place $s$, the logic becomes undecidable. This does even hold for very simple Petri nets corresponding to BPPs. Essentially, every extension of LTL  allowing to encode a test for zero is undecidable.

We consider the predicates $s\geq c$, $first(t)$, and $en(t)$ which are true at a marking $M$ of the Petri net if $M(s)\geq c$, the next transition after $M$ in the sequence is $t$, and if a transition of the Petri net $t$ is enabled by $M$. In~\cite{Esparza94onthe} it is shown that LTL with only $first(t)$ is decidable over Petri nets but undecidable over BPPs if additionally $s\geq c$ is added. 

Also the very basic fragment of LTL with only the temporal operator $F$ is undecidable  if all three predicates are present, already over conflict free Petri nets~\cite{Howell1989305}. In the following we refer to this logic as $L_F$. Of particular interest for our study are decidable fragments of $L_F$. The model checking problem was shown to be decidable if negation is only allowed in front of predicates~\cite{Howell1991341}. Decidability is also obtained if only the operator $GF$ is available and negation is only allowed in front of predicates~\cite{Jancar1990DTL}.
It is well known that Petri nets, VAS and VASS have the same expressivity power~\cite{hopcrof79VAS}. Hence, the (un)decidability results hold for all three models. The authors of~\cite{Blockelet2011model_checking_vas} consider an extension of CTL (without atomic propositions) with predicates $\mu(j)\geq c$ and temporal operators $U_\psi$ where $\psi$ is a Presburger arithmetic consttraint formula. The meaning of $\mu(j)\geq c$ is that $v_j\geq c$ where $(v_1,\dots,,v_n)$ are the values of the current configuration and $U_\psi$ constraints the possible plays. Essentially, the evaluation of the latter operator is based on the accumulated value of all transition vectors along a play fragment up to the current position. Clearly, given the results above, the model checking problem is undecidable in the general case, already if only the temporal operator $F$ is allowed. A positive result is that the \emph{existential fragment} without negation is  decidable. This result is also interesting for the fragments we are considering in this paper.

TODO: Refer to~\cite{BullingFarwer2010RAL-ECAI-short}. These restrictions seems most interesting for our purposes.

We note that the \emph{basic predicates} considered in the logics above take into account configurations. In our setting, however, atomic propositions are given wrt. the control state only!

\paragraph{Two-counter machines.} 
The following is based on~\cite{Bulling10phd,BullingFarwer2010RAL-ECAI-short} and is provided as a background.  
The undecidability proof is done by simulating  a \emph{two counter machine} $\automat$ (\AcTCM)  on the empty input. We write $\automathalts$ for  ``$\automat$ halts on empty input''.

Formally, a \AcTCM \automat is given by \[A=(S,\Gamma,s^\text{init},S_f,\Delta)\] where $S$ is a finite set of states, $\Gamma$ is the finite input alphabet, $s^\text{init}\in S$ is the initial state, $S_f\subseteq S$ is the set of final states, and 
\vcut{$\Delta\subseteq (S\times \Gamma\times \{0,1\}^2)\times(S\times \{-1,1\}^2)$} 
$\Delta\subseteq (S\times \Gamma\times \{0,1\}^2)\times(S\times \{-1,0,1\}^2)$ 
is the transition relation such that if $((s,a,E_1,E_2),(s',C_1,C_2))\in \Delta$ and $E_i=0$ then $C_i\neq -1$ for $i=1,2$ (to ensure that an empty counter cannot further be decremented). 
In the case of an empty input, we ignore the alphabet and assume 
\vcut{$\Delta\subseteq (S\times \{0,1\}^2)\times(S\times \{-1,1\}^2)$.}
$\Delta\subseteq (S\times \{0,1\}^2)\times(S\times \{-1,0,1\}^2)$.

A \AcTCM can be considered as a transition system equipped with two counters that influence the transitions. Each  step of the machine depends on the value of the counters, i.e. whether they are zero or non-zero. After each step the counters can be incremented or decremented.
It is important to stress that a \AcTCM can only distinguish between  an empty and non-empty counter. We consider the transition $((s,E_1,E_2),(s',C_1,C_2))\in\Delta$: $E_i=1$ (resp.\ $=0$) represents that counter $i$ is non-empty (resp.\ empty); $C_k=1$ (resp.\ $=-1$) denotes that counter $i$ is incremented (resp.\ decremented) by $1$, and $C_k=0$ that the counter value is not changed. The transition encodes that in state $s$ the TCM can change its state to $s'$ provided that the first (resp.\ second) counter meets  condition $E_1$ (resp.\ $E_2$). The value of counter $k$ changes according to $C_k$ for  $k=1,2$. The transition $((s,1,0),(s',-1,1))\in\Delta$, for example, is enabled if the current state is $s$, counter $1$ is non-empty, and counter $2$ is empty. If the transition is selected the state changes to $s'$, counter $1$ is decremented and counter $2$ is incremented by 1.

The general mode of operation is as for  pushdown automata. In particular, 

 A \emph{computation} $\delta$ is a sequence of subsequent configurations that can emerge by transitions according to $\Delta$ such that the first state is $s^\text{init}$.  An \emph{accepting} configuration is a finite computation $\delta=(s_i,w_1^i,w_2^i)_{i=1,\dots,k}$ where the last state $s_k$ is a final state, i.e. $s_k\in S_f$. We use $\delta_i=((s_i,E_1^{i},E_2^{i}),(s_{i+1},C_1^{i},C_2^{i}))\in \Delta$ to denote the tuple that leads from the $i$th configuration $(s_i,w_1^i,w_2^i)$ to the $i+1$st configuration $(s_{i+1},w_1^{i+1},w_2^{i+1})$ for $i<k$. In particular, we have that $w_j^{i+1}=w_j^i+C_j^i$ for $j=1,2$. 

\paragraph{General Idea of the Reduction.} We describe the general idea of the simulation of a $\AcTCM$ by means of a $\NGM$ together with a $\NGL$-formula. The formula is needed to ensure the correctness of the simulation. The model of a $\NGM$ on its own is too weak. Let $\automat=(S,\Gamma,s^\text{init},S_f,\Delta)$ be a \AcTCM. We represent the value of the two counters as the accumulated value  $w_1$ and $w_2$ of the two players $1$ and $2$, respectively. For each state of the machine, we add a state to the model and we label the accepting states in $S_f$ by proposition \prop{halt}. The increment and decrement  of counter values are modelled by transitions incrementing and decrementing from the corresponding utility value. The general idea underlying the reduction is as follows (the path formula depends on the specific fragment $\Logicname{L}$ of \NGLs considered): 

\begin{center}\emph{($\star$)\quad $\automathalts$ iff there is a play in the \NGM along  which a path formula $\gamma_\Logicname{L}$  is true.}\end{center} 

{\begin{figure}[t]
\centering
\includegraphics[scale=0.3]{figs/reduction_model.pdf}
\caption{Transformation of transitions $(s,E_1,E_2)\Delta (s_i,C_1,C_2)$ and $(s,E_1,E_2)\Delta (s_j,C_1',C_2')$ .}\label{fig:RALr}\label{fig:reduction}
\end{figure}}
The satisfying play in the \NGM  corresponds to an accepting computation of the TCM. 

It remains to construct formulae to ensure the sound modelling of the TCM transitions. For example, consider the tuple $((s,1,0),(s',-1,1))\in\Delta$. It can only be chosen if the second counter is  empty. This can directly be encoded into the model if \emph{guards} are available. If not, we need to delegate the task of the guards to the logic. The encoding of a transition $r:=((s,E_1,E_2),(s',C_1,C_2))$ is a three-step process (cf.\ Figure~\ref{fig:RALr}). In a state $s$ of the \NGM (we are economic and use the same notation for elements of the model and the TCM) an agent performs an action $\langle E_1, E_2\rangle$ in order to ``select'' the transition $r$; that is, to filter the transition of the TCM (note, that this action may not uniquely determine a transition of the TCM). The action results  in  a ``test'' state $s^{E_1E_2}$. In this state, an action $\langle s',C_1,C_2\rangle$  affecting the values of the counters $C_i$ / accumulated values $v_i$ can be executed (i.e.\ the action increments/decrements $v_i$). Clearly, such an action must only be enabled if the counter values are consistent with $E_i$ and $C_i$ respectively. The  check whether a counter/accumulated utility value is empty or not, takes place at the intermediate state $s^{E_1E_2}$. Therefore, we label the state with propositions $e_i$ iff $E_i=0$. Then, a logical formula $e_i\rightarrow (v_i=0)$ can be used to test wether a counter is empty or not.

\begin{lemma}[Reduction]\label{lemma:reduction}
For any two-counter machine $A$ we can construct a finite  \NGM $\model^A$ with two players and three propositions $\{\prop{e_1},\prop{e_2},\prop{halt}\}$ such that the following holds: $A$ halts on the empty input iff $\model^A$ contains a play $\pi$ with $\pi^c=(s^0,(v_1^0,v_2^0))(s^1,(v_1^1,v_2^1))\dots$ such that there exists $j\in\mathbb{N}$ with $halt\in \lab(s^j)$ and for all $i$ with $0\leq i<j$, $v_k^i\geq 0$ and whenever $e_k\in \lab(s^i)$ then $v_k^i=0$ for $k=1,2$.
\end{lemma}

\begin{corollary}\label{cor:und1}
Model checking the logic $\NGL$ is undecidable, even in the 2 agent case and formulae not involving arithmetic constraints nor nested cooperation modalities, where $\mathcal{S}^{p}=\mathsf{S}^{pr}$ and $\mathcal{S}^{o}=\mathsf{S}^{m}$.
\end{corollary}
\begin{proof}[{\ } (Sketch)] We modify the construction of $\model^\automat$ from Lemma~\ref{lemma:reduction} by refining the guards. We assume that an action $(s,E_1,E_2)$ is only applicable in state $s$ iff ($v_k=0$ iff $E_k=0$) for $k=1,2$; that is $g_1((s,E_1,E_2))=v_1\geq 0\wedge v_2\geq 0\wedge e_1\rightarrow v_a=0\wedge e_2\rightarrow v_2=0$. The resulting model is well-defined as action $\alpha_{err}$ is executable in all states. We have that $\coop{1}\Sometm\prop{halt}$ is true in $\model^A$ iff $\model^A$ contains a play $\pi$ with $\pi^c=(s^0,(v_1^0,v_2^0))(s^1,(v_1^1,v_2^1))\dots$ such that there is a $j\in\mathbb{N}$ such that $halt\in \lab(s^j)$ and for all $i$ with $0\leq i<j$, $v_k^i\geq 0$ and whenever $e_k\in \lab(s^j)$ then $v_k=0$ for $k=1,2$.
\end{proof}

\paragraph{Restoring decidability}

The main reason for the undecidability results above is the possibility for negative payoffs that allow for decrementing the accumulated payoffs and thus simulating the TCM operations. Therefore, a natural restriction in the quest for restoring decidability is to consider only  \NGM models with non-negative payoffs. In this case the accumulated payoffs increase monotonically over every play of the game, and therefore the truth values of every arithmetic constraint occurring in the guards and in the formula eventually stabilize in a computable way, which in the long run reduces the model checking of any $\NGL$-formula in an  \NGM to a model checking of an ATL-formula in a CGM. We therefore obtain the following: 

\begin{proposition}\label{prop1}
Model checking the logic $\NGL$ is decidable in any finite \NGM with non-negative payoffs. 
\end{proposition}}
 
\end{document}